\newcommand{\ds}                   {\displaystyle}
\newcommand{\mis}                 {{\text{ is }}}
\newcommand{\qurtains}             {\hfill{QED}}
\newtheorem{definition}{\vspace{1mm}Definition}[section]
\newtheorem{pth}[definition]{\vspace{1mm}Theorem}
\newtheorem{corollary}[definition]{\vspace{1mm}Corollary}
\newenvironment{proof}{\begin{trivlist}\item{\bf Proof:}}{\qurtains\end{trivlist}}
\newcommand{\nats}                  {{\mathbb N}}
\newcommand{\rats}                  {{\mathbb Q}}
\newcommand{\reals}                 {{\mathbb R}}
\newcommand{\termvalue}[1]          {{\lbrack\!\lbrack #1 \rbrack\!\rbrack}}
\newcommand{\tv}[1]                 {{\termvalue{#1}}}
\newcommand{\norm}[1]               {{\|#1\|}}
\newcommand{\ovl}[1]                {{\overline{#1}}}
\newcommand{\from}                  {{{\leftarrow}}}
\newcommand{\lverum}                {{\mathsf{t\!t}}}
\newcommand{\lfalsum}               {{\mathsf{f\!f}}}
\newcommand{\lneg}                  {\mathop{\neg}}
\newcommand{\limp}                  {\mathbin{\supset}}
\newcommand{\leqv}                  {\mathbin{\equiv}}
\newcommand{\lconj}                 {\mathbin{\wedge}}
\newcommand{\ldisj}                 {\mathbin{\vee}}
\newcommand{\lmaj}                  {\textsf{M}}
\newcommand{\ovn}[1]                {{\ovl{#1}}}
\newcommand{\lnneg}                 {\mathop{\ovn{\lneg}}}
\newcommand{\lnimp}                 {\mathbin{\ovn{\limp}}}
\newcommand{\lneqv}                 {\mathbin{\ovn{\leqv}}}
\newcommand{\lnconj}                {\mathbin{\ovn{\lconj}}}
\newcommand{\lndisj}                {\mathbin{\ovn{\ldisj}}}
\newcommand{\lnmaj}                 {\ovn{\textsf{M}}}
\newcommand{\ovf}[1]                {{\widetilde{#1}}}
\newcommand{\lfneg}                 {\mathop{\ovf{\lneg}}}
\newcommand{\lfimp}                 {\mathbin{\ovf{\limp}}}
\newcommand{\lfeqv}                 {\mathbin{\ovf{\leqv}}}
\newcommand{\lfconj}                {\mathbin{\ovf{\lconj}}}
\newcommand{\lfdisj}                {\mathbin{\ovf{\ldisj}}}
\newcommand{\lfnneg}                {\mathop{\ovf{\lnneg}}}
\newcommand{\lfnimp}                {\mathbin{\ovf{\lnimp}}}
\newcommand{\lfneqv}                {\mathbin{\ovf{\lneqv}}}
\newcommand{\lfnconj}               {\mathbin{\ovf{\lnconj}}}
\newcommand{\lfndisj}               {\mathbin{\ovf{\lndisj}}}
\newcommand{\lfmaj}                 {\ovf{\lmaj}}
\newcommand{\lfnmaj}                {\ovf{\lnmaj}}
\newcommand {\potc}[1]             {\mathbin{\,\sqsubseteq_{#1}\,}}
\newcommand {\otc}                  {\mathbin{\;\sqsubseteq\;}}
\newcommand {\Pnu}  		{{\mathfrak{P}}}
\newcommand {\num}[1]              {{#1}}
\newcommand{\sat}                    {\Vdash}
\newcommand{\ent}                    {\vDash}
\newcommand{\satfo}                  {\mathbin{\,\sat^{\mathsf{fo}}\,}}
\newcommand{\satuc}                  {\mathbin{\,\sat^{\mathsf{uc}}\,}}
\newcommand{\nsatuc}                  {\mathbin{\,\not\sat^{\mathsf{uc}}\,}}
\newcommand{\entuc}                  {\mathbin{\,\ent^{\mathsf{uc}}\,}}
\newcommand{\nentuc}                  {\mathbin{\,\not \ent^{\mathsf{uc}}\,}}
\newcommand {\Sigmaf}               {\ovf\Sigma}
\newcommand {\Lc}                   {L^{\mathsf{c}}}
\newcommand {\Lo}                   {L^{\mathsf{o}}}
\newcommand {\La}                   {L^{\mathsf{a}}}
\newcommand {\Lfc}                   {L^{\mathsf{uc}}}
\newcommand {\ORCF}                   {\mathsf{RCOF}}
\newcommand {\UCL}                  {\mathsf{UCL}}
\newcommand {\var}                   {\text{var}}
\newcommand {\fau}                   {\text{unr}}
\newcommand {\SATV}               {\text{MC}}
\newcommand {\SAT}               {\text{SAT}}
\newcommand {\VAL}               {\text{VAL}}
\newcommand {\ARR}               {\text{RRA}}
\newcommand {\RRD}               {\text{RRD}}
\newcommand {\PMC}               {\text{MW}}
\newcommand {\SATU}               {\text{SAT}}
\newcommand {\ENTA}               {\text{acVAL}}
\newcommand {\OSC}               {\text{SRO}}
\newcommand {\CQE}               {\text{CQE}}
\newcommand {\orcf}                  {\text{rcof}}
\newcommand {\cA}                   {\mathcal{A}}
\newcommand{\nsatfo}                  {\mathbin{\,\not\sat^{\mathsf{fo}}\,}}
\newcommand {\PL}                  {\mathsf{PL}}
\newcommand {\Sigmafc}              {\Sigma^{\mathsf{uc}}}
\newcommand{\satfc}                  {\mathbin{\,\sat^{\mathsf{uc}}\,}}
\newcommand{\entfc}                  {\mathbin{\,\ent^{\mathsf{uc}}\,}}
\begin{document}

\title{Decision and optimization problems in the Unreliable-Circuit Logic}

\author{J.~Rasga${}^1$ \ \ C.~Sernadas${}^1$ \ \ P.~Mateus${}^2$ \ \ A.~Sernadas${}^1$ \\[1mm]
{\scriptsize ${}^1$ Dep.~Matemática, Instituto Superior Técnico}\\[-1.5mm] 
{\scriptsize and}\\[-1.5mm] 
{\scriptsize Centro de Matemática, Aplicações Fundamentais e Investigação Operacional}\\[0.5mm]
{\scriptsize ${}^2$ Dep.~Matemática, Instituto Superior Técnico}\\[-1mm]
{\scriptsize and}\\[-1.5mm]
{\scriptsize Instituto de Telecomunicações}\\[3mm]
{Universidade de Lisboa, Portugal}\\[3mm]
{\tiny \{joao.rasga,cristina.sernadas,paulo.mateus,amilcar.sernadas\}@tecnico.ulisboa.pt}}

\date{July 27, 2016}

\maketitle

\begin{abstract}
The ambition constrained validity  and the model witness problems  in the logic $\UCL$, proposed in \cite{acs:jfr:css:pmat:13},  for reasoning about circuits with unreliable gates are analyzed.  Moreover, two additional problems, motivated by the applications,  are studied. One consists of finding bounds on the reliability rate of the gates that ensure that a given circuit has an intended success rate. The other consists of
finding a reliability rate of the gates that maximizes the success rate of a given circuit. Sound and complete algorithms are developed for these problems and their computational complexity is studied. \\[2mm]
{\bf Keywords} probabilistic logic, unreliable circuits, algorithms in logic, computational complexity. 
\end{abstract}

\section{Introduction}

Classical propositional logic is the right setting for the design and verification of logic circuits represented by formulas. Some examples of recent work in this broad area can be found in~\cite{cla:09,bol:10}. 

In practice, 
logic circuits can be built with unreliable gates that can produce the wrong output by fortuitous misfiring, hopefully with a very low probability.
The interest on reasoning about such unreliable circuits was pointed out by John von Neumann~\cite{neu:56} and 
recently reawaken by developments towards nano-circuits, see for instance~\cite{han:11,lee:12,rej:09}. 
In fact, in nano circuits, the extremely low level of energy carried by each gate leads to a higher probability of it being disturbed by the environment and, so, misfiring. This was the motivation for proposing, see~\cite{acs:jfr:css:pmat:13}, the logic $\UCL$ for reasoning about unreliable circuits, as an extension of propositional logic (see also~\cite{acs:css:jfr:pmat:11,jfr:css:acs:14} for more work on this topic). 

In this paper, we investigate several computational and decision problems related with $\UCL$, namely the ambition constrained validity problem,
the model witness problem, the reliability rate abduction problem and the success rate optimization problem. The first two are common when analyzing a logic from an algorithmic and complexity  points of view. The remaining constitute applications of $\UCL$ to real problems.
For all the problems we discuss an application scenario.

The ambition constrained validity problem is a decision problem for verifying whether or not a formula representing a circuit with unreliable gates is a semantic consequence of a finite set of ambition formulas. An algorithm for this problem is presented and shown to be in PSPACE and a restricted version of it is proved to be co-NP complete (for computational complexity issues the reader may consult~\cite{hom:11}). 

The model witness problem is a constructive model checking problem where the objective is to give a model for a formula representing a circuit with unreliable gates whenever there is one. An algorithm for this problem is presented and shown to run deterministically in polynomial space. Moreover, we also show that a logarithmic version of the satisfiability problem for $\UCL$  is NP complete.

The reliability rate abduction problem is a computational problem that given a formula representing an unreliable circuit, an intended success rate of the circuit and a natural number returns a set of intervals with limits determined by the given natural number such that
if the reliability rate of the gates falls in those intervals  then the circuit has the given success rate. Herein we define an algorithm for this problem that runs deterministically in polynomial space. Moreover, we show that a restricted version of the corresponding decision problem is in P.

Finally, we investigate the success rate optimization problem which is a computational problem that given a formula representing an unreliable circuit returns the maximum success rate of the circuit as well as a reliability rate of the gates that ensures that success rate. We provide an algorithm for this problem and show that it runs deterministically in exponential time.

For solving these problems we make extensive use of the decidable theory $\ORCF$ of real closed ordered fields (see~\cite{tar:51}) namely some algorithms for this theory (see~\cite{bas:06}).

We provide an overview of $\UCL$ in Section~\ref{sec:overview}. Sections~\ref{sec:gval},~\ref{sec:mcp},~\ref{sec:rra} and~\ref{sec:sro} are dedicated to the computation and decision problems described above. Finally, in Section~\ref{sec:outlook}, we point out further developments.

\section{An overview of $\UCL$}\label{sec:overview}

The unreliable-circuit logic $\UCL$ is an extension of propositional logic and was introduced in \cite{acs:jfr:css:pmat:13} for reasoning about logic circuits with single-fan-out unreliable gates that can produce the wrong output by fortuitous misfiring. Herein, we provide an overview of its syntax and semantics. 

We start by presenting a modicum of  the theory of real closed ordered fields, $\ORCF$, and of propositional logic, $\PL$, that we need later on for presenting $\UCL$. 


The first-order signature $\Sigma_\orcf$ of $\ORCF$ contains the constants $\num0$ and $\num1$, the unary function symbol $-$, the binary function symbols $+$ and $\times$, and the binary predicate symbols $=$ and $<$. 
As usual, we may write $t_1 \leq t_2$ for $(t_1 < t_2) \ldisj (t_1=t_2)$ and 
$t_1\,t_2$ for $t_1 \times t_2$. In the sequel, we denote by $\termvalue{t}^\rho$ the denotation of term $t$ over the $\ORCF$ structure based on  $\reals$ and the assignment $\rho$.
We use $\satfo$ for denoting   satisfaction in first-order logic. In the sequel, we use extensively the fact that  the theory $\ORCF$ is decidable~\cite{tar:51}.


We need an enriched signature of propositional logic that we denote by $\Sigma$. 
Let $\Sigma$ be the signature for $\PL$ containing 
the propositional constants $\lverum$ (verum) and $\lfalsum$ (falsum) plus the propositional connectives $\lneg$ (negation), $\lconj$ (conjunction), $\ldisj$ (disjunction), $\limp$ (implication), $\leqv$ (equivalence) and $\lmaj_{3+2k}$ ($k$-ary majority) for each $k\in\nats$, as well as their negated-output counterparts $\lnneg$ (identity), $\lnconj$ (negated conjunction), $\lndisj$ (negated disjunction), 
$\lnimp$ (negated implication), $\lneqv$ (negated equivalence) and $\lnmaj_{3+2k}$ ($k$-ary negated majority) for each $k\in\nats$. 
We denote by $L(X)$ the set of propositional formulas over $\Sigma$ and  a set $X$ of propositional variables.  
Given a formula $\varphi \in L(X)$ and a valuation $v:X\to\{\bot,\top\}$, we write 
$v \sat \varphi$, 
for saying that $v$ satisfies formula $\varphi$.

We now are ready to review the unreliable-circuit logic.~The {\it signature} of $\UCL$ is the triple $(\Sigmafc,\nu,\mu)$ where:
\begin{itemize}
\item $\Sigmafc$ contains $\Sigma$ and the following additional connectives used for representing the unreliable gates: 
$$\text{$\lfneg$, $\lfnneg$, $\lfconj$, $\lfnconj$, 
$\lfdisj$, $\lfndisj$, $\lfimp$, $\lfnimp$, $\lfeqv$, $\lfneqv$, $\lfmaj_{3+2k}$ and $\lfnmaj_{3+2k}$;}$$
\item both $\nu$ and $\mu$ are symbols used for denoting probabilities.
\end{itemize}

Each unreliable gate is assumed to produce the correct output with probability $\nu$. 
A circuit is accepted as good if it produces the correct output with probability not less than $\mu$.

We denote by $\Sigmaf$ the subsignature of the unreliable connectives in $\Sigmafc$. Thus,
$\Sigmafc = \Sigma \cup \Sigmaf$. 
Moreover, for each $n\in\nats$, we denote by 
$\Sigma_n$, $\Sigmafc_n$ and $\Sigmaf_n$ the set of $n$-ary constructors 
in $\Sigma$, $\Sigmafc$ and $\Sigmaf$, respectively.  
Plainly, $\Sigmaf_0=\emptyset$.
Given a $\PL$ formula $\varphi$ 
and a $\UCL$ formula $\psi$, we write
$$\varphi \otc \psi$$
for saying that $\varphi$ is a possible {\it outcome} of $\psi$. This outcome relation is inductively defined as expected:
\begin{itemize}
\item $\varphi \otc \varphi$ provided that $\varphi$ is a $\PL$ formula;
\item $c(\varphi_1,\dots,\varphi_n) \otc c(\psi_1,\dots,\psi_n)$ 
		provided that
		$n\geq1$, 
		$c\in\Sigma_n$ 
		and $\varphi_i \otc \psi_i$ for $i=1,...,n$;
\item $c(\varphi_1,\dots,\varphi_n) \otc \ovf{c}(\psi_1,\dots,\psi_n)$ 
		provided that
		$\ovf{c}\in\Sigmaf_n$,
		and $\varphi_i \otc \psi_i$ for $i=1,...,n$;
\item $\ovn{c}(\varphi_1,\dots,\varphi_n) \otc \ovf{c}(\psi_1,\dots,\psi_n)$ 
		provided that
		$\ovf{c}\in\Sigmaf_n$,
		and $\varphi_i \otc \psi_i$ for $i=1,...,n$.
\end{itemize}
For each such $\psi$, we denote by 
$$\Omega_\psi$$
the set $\{\varphi:\varphi\otc\psi\}$ of all possible outcomes of $\psi$.
Clearly, $\Omega_\varphi=\{\varphi\}$ for each $\PL$ formula $\varphi$.

In $\UCL$, by a {\it term} we mean a univariate polynomial written according to the term syntax of $\ORCF$, using $\nu$ as the variable. 
Symbol $\mu$ is also taken as a variable in the context of $\ORCF$ but it is not used in $\UCL$ terms.

Three kinds of formulas are needed for reasoning about circuits with unreliable gates: 
\begin{itemize}

\item {\it Circuit formulas} or {\it c-formulas} that are propositional formulas built with the symbols in $\Sigmafc$ and $X$.
These c-formulas can be used for representing unreliable circuits.
For instance, the c-formula
$$(x_1 \lfdisj (\lfneg x_2)) \lfconj x_3$$
represents the unreliable circuit in Figure~\ref{fig:circuit1}.
Circuit formulas can also be used for asserting relevant properties of unreliable circuits.
For example, 
given the c-formula $\psi$ and the $\PL$ formula $\varphi$,
the c-formula
$$\psi \leqv \varphi$$
is intended to state that the unreliable circuit represented by $\psi$ can be accepted as equivalent to
the reliable circuit represented by $\varphi$, in the sense that the two circuits agree with probability of at least $\mu$.

\begin{figure}
\centering
\tikzstyle{branch}=[fill,shape=circle,minimum size=3pt,inner sep=0pt]
\begin{tikzpicture}[label distance=2mm]

    \node (x1) at (0,0) {$x_1$};
    \node (x2) at (0.6,0) {$x_2$};
    \node (x3) at (1.2,0) {$x_3$};

    \node[not gate US, draw] at ($(x3)+(1.4,-3)$) (Not) {$\sim$};
    \node[or gate US, draw, anchor=input 1] at ($(Not.output)+(1,1)$) (Or) {${}\sim{}$};
    \node[and gate US, draw, anchor=input 1] at ($(Or.output)+(1,-1.5)$) (And) {${}\sim{}$};

    \path let \p1=(x1), \p2=(Or.input 1) in node[branch] (x1oru) at ($(\x1,\y2)$)   {};
    \path let \p1=(x1), \p2=(And) in node[] (x1andm) at ($(\x1,\y2)+(0.0,-1.0)$)   {};

    \path let \p1=(x2), \p2=(Not.input) in node[branch] (x2not) at ($(\x1,\y2)$)   {};
    \path let \p1=(x2), \p2=(And) in node[] (x2andm) at ($(\x1,\y2)+(0.0,-1.0)$)   {};

    \path let \p1=(x3), \p2=(And.input 2) in node[branch] (x3andp) at ($(\x1,\y2)$)   {};
    \path let \p1=(x3), \p2=(And) in node[] (x3andm) at ($(\x1,\y2)+(0.0,-1.0)$)   {};

    \draw (x1) -- (x1oru);
    \draw (x1oru) -- (Or.input 1);
    \draw (x1oru) -- (x1andm);

    \draw (x2) -- (x2not);
    \draw (x2not) -- (Not.input);
    \draw (x2not) -- (x2andm);

    \draw (x3) -- (x3andp);
    \draw (x3andp) |- (And.input 2);
    \draw (x3andp) -- (x3andm);

    \draw (Not.output) -- ([xshift=0.4cm]Not.output) |- (Or.input 2);
    \draw (Or.output) -- ([xshift=0.4cm]Or.output) |- (And.input 1);
    \draw (And.output) -- ([xshift=0.5cm]And.output) node[above] {};

\end{tikzpicture}\vspace*{-6mm}
\caption{Circuit represented by the c-formula $(x_1 \lfdisj (\lfneg x_2)) \lfconj x_3$.}
\label{fig:circuit1}
\end{figure}
		
\item {\it Outcome formulas} or {\it o-formulas} that are of the general form
$$\Phi \potc{P} \psi$$
where $\psi$ is a c-formula, 
$\Phi\subseteq\Omega_\psi$ and $P$ is a term. 
Such an o-formula is used with the intent of stating that 
the probability of the outcome of $\psi$ being in $\Phi$ is at least $P$.
For instance,
$$\{(x_1 \ldisj (\lneg x_2)) \lconj x_3, (x_1 \lndisj (\lneg x_2)) \lconj x_3\} \potc{\nu^2} (x_1 \lfdisj (\lfneg x_2)) \lfconj x_3$$
should be true in any interpretation of $\UCL$ because
$(x_1 \ldisj (\lneg x_2)) \lconj x_3$ and $(x_1 \lndisj (\lneg x_2)) \lconj x_3$
are both possible outcomes of $(x_1 \lfdisj (\lfneg x_2)) \lfconj x_3$
(the former when all the unreliable gates perform perfectly and the latter when only the OR gate fails),
the probability of the former is $\nu^3$,
the probability of the latter is $(\num1-\nu)\nu^2$,
and $\nu^3 + (\num1-\nu)\nu^2 = \nu^2$.
We may use $\varphi_1,\dots,\varphi_m \potc{P} \psi$ instead of $\{\varphi_1,\dots,\varphi_m\} \potc{P} \psi$.
		
\item {\it Ambition formulas} or {\it a-formulas} that are of the general form
$$\mu \leq P$$
where $P$ is a term. 
Such an a-formula can be used for constraining
the envisaged non-failure probability $\mu$ of the overall circuit.
For example, every interpretation of $\UCL$ where the a-formula
$$\mu \leq \nu^2 + (\num1-\nu)^2$$
holds should make 
$$(\lneg(x_1 \ldisj x_2)) \leqv (\lfneg(x_1 \lfdisj x_2))$$
true, since
$\lneg(x_1 \ldisj x_2)$
and
$\lnneg(x_1 \lndisj x_2)$
are the outcomes of the circuit at hand
$\lfneg(x_1 \lfdisj x_2)$
that make it in agreement to the ideal one $\lneg(x_1 \ldisj x_2)$,
the probability of outcome $\lneg(x_1 \ldisj x_2)$ is $\nu^2$,
the probability of outcome $\lnneg(x_1 \lndisj x_2)$ is $(\num1-\nu)^2$,
and, so, their aggregated probability is 
$$\nu^2 + (\num1-\nu)^2.$$
\end{itemize}

We denote by $\Lc(X)$, $\Lo(X)$ and $\La$ the set of c-formulas, o-formulas and a-formulas, respectively, 
and by $\Lfc(X)$ the set $\Lc(X) \cup \Lo(X) \cup \La$ of all $\UCL$ formulas. Observe that each of these sets is decidable.
Given a c-formula $\psi$ and $\varphi\in\Omega_\psi$, we write
$$\Pnu[\psi \triangleright \varphi]$$
for the $\UCL$ term that provides the probability of outcome $\varphi$ of $\psi$. This term is inductively defined as follows:
\begin{itemize}
\item $\Pnu[\varphi \triangleright \varphi] \mis \num1$ for each $\varphi\in L(X)$;
\item $\Pnu[c(\psi_1,\dots,\psi_n) \triangleright c(\varphi_1,\dots,\varphi_n)]\mis 
		\ds \prod_{i=1}^n \Pnu[\psi_i \triangleright \varphi_i]$	
		 for each
		$n\geq1$, $c\in\Sigma_n$ and $\varphi_i\otc\psi_i$ for $i=1,\dots,n$;
\item $\Pnu[\ovf{c}(\psi_1,\dots,\psi_n) \triangleright c(\varphi_1,\dots,\varphi_n)]\mis 
		\nu \ds\prod_{i=1}^n \Pnu[\psi_i \triangleright \varphi_i]$		
		for each
		$\ovf{c}\in\Sigmaf_n$ and $\varphi_i\otc\psi_i$ for $i=1,\dots,n$;
\item $\Pnu[\ovf{c}(\psi_1,\dots,\psi_n) \triangleright \ovn{c}(\varphi_1,\dots,\varphi_n)]\mis 
		(\num1- \nu) \ds \prod_{i=1}^n \Pnu[\psi_i \triangleright \varphi_i]$
		for each
		$\ovf{c}\in\Sigmaf_n$ and $\varphi_i\otc\psi_i$ for $i=1,\dots,n$.
\end{itemize}

For instance,
$$\Pnu[\lfneg(x_1 \lfdisj x_2) \triangleright \lneg(x_1 \lndisj x_2)]$$
is the polynomial
$$\nu(\num1-\nu)$$
since, for the given input provided by $x_1$ and $x_2$,
outcome $\lneg(x_1 \lndisj x_2)$ happens when $\lfneg$ behaves as it should and $\lfdisj$ fails,
that is, when $\lfneg$ produces the correct output and $\lfdisj$ misfires.


Each interpretation of $\UCL$ should provide
a valuation to the variables in $X$,
a model of $\ORCF$ and 
an assignment to the variables $\nu$ and $\mu$. 
However,
the choice of the model of $\ORCF$ is immaterial since all such models are elementarily equivalent (see Corollary~3.3.16 in~\cite{mar:02}) and, so, we adopt the ordered field $\reals$ of the real numbers.
Thus, by an interpretation of $\UCL$ we mean a pair
$$I=(v,\rho)$$
where $v$ is a propositional valuation and $\rho$ is an assignment over $\reals$ such that:
$$\begin{cases}
{\frac{1}{2}} < \rho(\mu) \leq 1\\
{\frac{1}{2}} < \rho(\nu) \leq 1.
\end{cases}$$

We now proceed to define {\it satisfaction}, by an interpretation $I=(v,\rho)$. 
Starting with c-formulas, we write
$$I \satfc \psi$$
for stating that
$$\reals\, \rho \satfo \ds\mu\leq\sum_{\text{\shortstack[c]
{$\varphi \otc \psi$\\
$v \sat \varphi$}
}} \Pnu[\psi \triangleright \varphi].$$
That is, the aggregated probability of the outcomes of $\psi$ that are (classically) satisfied by $v$ is at least the value of $\mu$.
Concerning o-formulas, we write
$$I \satfc \Phi \potc{P} \psi$$
for stating that
$$\reals\, \rho \satfo {P} \leq \ds\sum_{\varphi\in\Phi} \Pnu[\psi \triangleright \varphi].$$
That is, the collection $\Phi$ of possible outcomes of $\psi$ has aggregated probability greater than or equal to the value of 
$P$.
Finally, concerning a-formulas, we write
$$I \satfc \mu \leq P$$
for stating that
$$\reals\, \rho \satfo \mu \leq {P}.$$
That is, the required probability $\rho(\mu)$ for the correct output being produced by the whole circuit does not exceed the value of $P$.
Satisfaction is extended to mixed sets of o-formulas, c-formulas and a-formulas with no surprises. 
Given $\Gamma\subseteq \Lfc(X)$,
$I \satfc \Gamma$
if $I \satfc \gamma$ for each $\gamma\in\Gamma$.
Then, entailment and validity in $\UCL$ are also defined as expected. Given 
$\{\theta\}\cup\Gamma \subseteq \Lfc(X)$, we write
$$\Gamma \entfc \theta$$
for stating that $\Gamma$ {\it entails} $\theta$ in the following sense:
$$I \satfc \theta \,\text{ whenever }\,I \satfc \Gamma, 
					\text{ for every interpretation }I.$$
Finally, we write 
$\entfc \theta$
for $\emptyset \entfc \theta$.

\section{Ambition constrained validity problem}\label{sec:gval}

We discuss the complexity of the ambition constrained version of the validity problem for $\UCL$ that consists on determining whether a formula representing a circuit with unreliable gates can be entailed by a finite set of ambition formulas. This problem coincides with the usual formulation of the validity problem for $c$-formulas whenever the set of ambition formulas is empty. It can be used, for example, to prove that a circuit with unreliable gates behaves as its ideal counterpart at least with a certain probability. 

The {\it ambition constrained validity problem} is the map 
$$\ENTA^\UCL: \Lc(X) \times \wp_{\text{fin}} \La\to \{0,1\}$$
that given a  formula $\psi$ representing a circuit with unreliable gates and a finite set $\Gamma$ of ambition formulas, returns $1$ if and only if $\Gamma \entuc \psi$.

In order to propose an algorithm for this problem, we need to refer to algorithms for the problems $\SATV^\PL$ and $\SAT^{\exists\ORCF}$.                                             
The $\SATV^\PL$ model checking  problem for propositional logic is the map 
$$\SATV^\PL: L(X) \times V \to \{0,1\}$$ 
that given a propositional formula $\varphi$ and a valuation $v$ returns $1$ if and  only if  $v \sat \varphi$.  Observe that
there is an algorithm (see~\cite{hom:11}) running in polynomial time for this problem. Herein, 
we use the name $\cA_{\SATV^\PL}$  to refer to such  algorithm.

To introduce the problem $\SAT^{\exists\ORCF}$, we need to refer to the  language $L^\exists(\Sigma_\orcf)$ consisting of the closed  formulas over 
the first-order signature $\Sigma_\orcf$  for the theory of real closed ordered fields (see~\cite{tar:51}) 
of the form $\exists x_1 \dots \exists x_k \, \theta$ where $\theta$ is a quantifier free formula. 
More precisely, the satisfiability problem for the existential fragment of the theory of real closed ordered fields is the map 
$$\SAT^{\exists\ORCF}: L^\exists(\Sigma_\orcf) \to \{0,1\}$$
that given a formula in $L^\exists(\Sigma_\orcf)$  returns $1$ if and only if there is an real closed ordered field that satisfies the formula. Observe that there is an algorithm for this problem that given an existential  formula $\eta$ with $s$ polynomials each of a degree at most $d$, over a set of $k$ variables, 
executes at most $s^{k+1} d^{O(k)}$ operations (see Theorem~13.13 in~\cite{bas:06}) in the ring generated by the coefficients of the polynomials. 
We use the name $\cA_{\SAT^{\exists\ORCF}}$ to refer to this algorithm. 

Moreover, we need also some notation. 
Given a c-formula $\psi$, we denote by $\var(\psi)$  the set of all propositional variables in $\psi$, by  $\fau(\psi)$ the sequence of 
all unreliable connectives occurring in $\psi$, and by $V_{\var(\psi)}$ the set of all valuations from $\var(\psi)$ to $\{\bot,\top\}$. Moreover, we use $|\cdot |$ for the map that returns the number of elements of the argument and 
$\norm{\cdot }$ for the map that returns the size in bits of the argument.

\begin{figure}
	       \hrule  \vspace*{4mm}
	       Inputs: c-formula $\psi$ and finite set $\Gamma$ of a-formulas.
\begin{enumerate}
  \item For each $v \in V_{\var(\psi)}$ do:
  \begin{enumerate}
    \item Let $P^v_{\psi} := 0$;
    \item For each $\varphi \in \Omega_\psi$:
    \begin{enumerate}
       \item If $\cA_{\SATV^\PL}(\varphi,v)=1$ then 
       $P^v_{\psi}:=P^v_{\psi}+ \Pnu[\psi \triangleright \varphi]$;
    \end{enumerate}
    \item If $\cA_{\SAT^{\exists\ORCF}}\left(\exists  \mu\exists \nu \lneg \left(\frac{ 1}{ 2} < \mu,\nu \leq  1 \limp \left(\bigwedge \Gamma \limp 
    P^v_{\psi} \geq \mu\right)\right)\right)=1$ then 
     Return $0$;
  \end{enumerate}
  \item Return $1$.
  \end{enumerate}
       \hrule  \vspace*{4mm}
\caption{Algorithm $\cA_{\ENTA^\UCL}$  for Problem $\ENTA^\UCL$.}
\label{alg:enta}
\end{figure}

\paragraph{Soundness and completeness}\ \\[1mm]
Let $\cA_{\ENTA^\UCL}$ be the algorithm in Figure~\ref{alg:enta}. Observe that it is in fact an algorithm since the cycles in Step 1 and Step 1(b) are over finite sets $V_{\var(\psi)}$ and $\Omega_\psi$, respectively, and 
$\cA_{\SATV^\PL}$ and $\cA_{\SAT^{\exists\ORCF}}$ are algorithms. 
Moreover, it is straightforward to conclude that the execution of the algorithm returns either $0$ or $1$.

\begin{pth} \em
Let $\psi \in \Lc(X)$ and $\Gamma \subseteq \La$. Then, 
$$\cA_{\ENTA^\UCL}(\psi,\Gamma)  \; \text{returns} \; 1 \quad \text{iff} \quad \ENTA^\UCL(\psi,\Gamma)=1.$$
\end{pth} 
\begin{proof}\ \\
$(\to)$ Assume that $\cA_{\ENTA^\UCL}(\psi,\Gamma)$ returns $1$. Then, for every valuation $v$ 
$$(\dag) \quad \cA_{\SAT^{\exists\ORCF}}\left(\exists  \mu\exists \nu \lneg \left(\frac{ 1}{ 2} < \mu,\nu \leq  1 \limp \left(\bigwedge \Gamma \limp 
    P^v_{\psi} \geq \mu\right)\right)\right)=0.$$
  Let $(v,\rho)$ be an interpretation of $\UCL$. Assume that $(v,\rho) \satuc \Gamma$. By $(\dag)$
  $$\reals\rho \satfo \frac{ 1}{ 2} < \mu,\nu \leq  1 \limp \left(\bigwedge \Gamma \limp 
    P^v_{\psi} \geq \mu\right).$$ Thus,
   $$\reals\rho \satfo P^v_{\psi} \geq \mu$$
   and so $(v,\rho) \satuc \psi$.
\\[2mm]
$(\from)$ We prove the result by contraposition. Assume that $\cA_{\ENTA^\UCL}(\psi,\Gamma)$ returns $0$. 
Thus, 
$\cA_{\ENTA^\UCL}(\psi,\Gamma)$ returns $0$ in step 1(c) for a valuation $v$. Then,
$$\cA_{\SAT^{\exists\ORCF}}\left(\exists  \mu\exists \nu \lneg \left(\frac{ 1}{ 2} < \mu,\nu \leq  1 \limp \left(\bigwedge \Gamma \limp 
    P^v_{\psi} \geq \mu\right)\right)\right)=1.$$ Hence, there is an assignment $\rho$ such that
 $$
 \begin{cases}
 (1)\quad \reals \rho \satfo \frac{ 1}{ 2} < \mu,\nu \leq  1\\[1mm]
 (2)\quad \reals \rho \satfo \bigwedge \Gamma\\[1mm]
 (3) \quad \reals \rho \nsatfo P^v_{\psi} \geq \mu.
 \end{cases}
 $$ 
 Consider the interpretation $(v,\rho)$ of $\UCL$. Then, by (2) $(v,\rho) \satuc \Gamma$. Moreover, by (3),
 $(v,\rho) \nsatuc \psi$.
\end{proof}

\paragraph{Complexity}\ \\[1mm]
We start by showing that the decision problem  $\ENTA^\UCL$ is in PSPACE. With an additional restriction on the size in bits of the number of unreliable connectives we are able to show that it is a co-NP complete problem. 

\begin{pth} \em
The problem $\ENTA^\UCL$ is in PSPACE.
\end{pth}
\begin{proof}
We are going to prove that algorithm in Figure~\ref{alg:enta} uses a polynomial amount of space. 
Let $n$ be the size in bits of the inputs $\psi$ and $\Gamma$. Thus, $|\var(\psi)|$, $|\Gamma|$, $|\fau(\psi)|$ are in $O(n)$.
Then:
\begin{itemize}
\item The storage of $v$ in Step 1 uses $O(n)$ bits;
\item The assignment in Step 1(a) uses $O(1)$ bits;
\item The storage of $\varphi$ in Step 1(b) uses $O(n)$ bits;
\item The inner cycle 1(b) iterates $O(2^n)$ times;
\item The algorithm $\cA_{\SATV^\PL}$ runs in polynomial space $O(n^k)$ for some $k$;
\item The storage of each coefficient of polynomial $\Pnu[\psi \triangleright \varphi]$ uses $O(n^k)$ bits for some $k$. Indeed
$$\Pnu[\psi \triangleright \varphi] = \sum_{j=0}^{|\fau(\psi)|-\ell} \left(\begin{array}{c} |\fau(\psi)|-\ell\\ j \end{array}\right) (-1)^j \nu^{\ell+j}$$
assuming that the number of unreliable connectives in $\psi$ that were replaced in $\varphi$ by ideal connectives is $\ell$;
\item The storage of polynomial $\Pnu[\psi \triangleright \varphi]$ uses $O(n^k)$ bits for some $k$;
\item Each coefficient in $P^v_{\psi}$ is obtained by summing  the coefficients of 
the same degree in $\{\Pnu[\psi \triangleright \varphi]: \varphi \in \Omega_\psi \text{ and } v \sat \varphi\}$. There are at most $2^n$ coefficients of the same degree each using $O(n^k)$ bits for some $k$ and so the maximum value of each coefficient is in $O(2^{n^k})$ for some $k$. Hence,
the maximum value of each coefficient of  $P^v_{\psi}$ is less than 
$$\sum_{j=1}^{2^n} 2^{n^k}= 2^{n^k+n}.$$
So, the storage of each coefficient in $P^v_{\psi}$ is in $O(n^k)$ for some $k$;
\item The storage of polynomial $P^v_{\psi}$ uses $O(n^k)$ bits for some $k$;
\item The number of operations when executing $\cA_{\SAT^{\exists\ORCF}}$ is
$$(|\Gamma|+5)^3 |\fau(\psi)|^{O(2)}$$
in the ring generated by the coefficients of the polynomials in
$$\exists  \mu\exists \nu \lneg \left(\frac{ 1}{ 2} < \mu,\nu \leq  1 \limp \left(\bigwedge \Gamma \limp 
    P^v_{\psi} \geq \mu\right)\right);$$
\item Each such ring operation executed by $\cA_{\SAT^{\exists\ORCF}}$ has a polynomial cost in terms of bit operations (see Section~8.1 of~\cite{bas:06}). Hence, the execution of $\cA_{\SAT^{\exists\ORCF}}$ in step 1(c) uses at most $O(n^k)$ bits for some $k$;
\item The number of bits used in each iteration of cycle 1 is the sum of the number of bits used in the storage of polynomial $P^v_{\psi}$
plus the number of bits used by $\cA_{\SAT^{\exists\ORCF}}$ in step 1(c). So it is $O(n^k)$ for some $k$;
\item Each iteration of cycle 1 reuses the space used in the previous iteration. So the cycle uses at most $O(n^k)$ bits for some $k$.
\end{itemize}
Thus, the space complexity of $\cA_{\ENTA^\UCL}(\psi,\Gamma)$ is in PSPACE. 
\end{proof}
We now consider a restricted version of the ambition constrained validity problem which is co-NP complete. Let 
$$\ENTA^\UCL_{\log}: \Lc_{\log}(X) \times \wp_{\text{fin}} \La\to \{0,1\}$$
be a map that given a formula $\psi$ representing a circuit with unreliable gates such that
 $|\fau(\psi)| \in O(\log \norm{\psi})$ 
and a finite set $\Gamma$ of ambition formulas, returns $1$ if and only if $\Gamma \entuc \psi$.
To analyse the complexity of $\ENTA^\UCL_{\log}$, we need to refer to the well known validity problem 
$$\VAL: L(X) \to \{0,1\}$$ that given a classical propositional formula $\varphi$ returns $1$ if and  only if  $\varphi$ is a tautology.
Observe that this problem is co-NP complete. 

\begin{pth} \em
The problem $\ENTA^\UCL_{\log}$  is co-NP complete. 
\end{pth}
\begin{proof} Indeed:\\
(a)~$\ENTA^\UCL_{\log}$ is in co-NP. We show that the complement of $\ENTA^\UCL_{\log}$ is in NP. That is, given $\psi$ and $\Gamma$
it returns $1$ iff $\Gamma \nentuc \psi$. We consider a  version of the algorithm  in Figure~\ref{alg:enta} that generates nondeterministically a valuation (the witness in this algorithm) and that returns $1$ whenever the original algorithm for that valuation returns $0$ and returns $0$ when the original algorithm for that valuation does not return $0$. Observe that the time for generating each valuation is polynomial on $\norm{\psi}$. 
Since the number of unreliable gates is logarithmic on $\norm{\psi}$, the inner cycle over $\Omega_\psi$ only iterates a polynomial number of times on $\norm{\psi}$. Moreover, each iteration
only takes at most  polynomial time on $\norm{\psi}$. 
The execution of $\cA_{\SAT^{\exists\ORCF}}$ in step 1(c)  takes $O(n^k)$ time, for some $k$. Hence, the whole verification part of the nondeterministic algorithm runs in polynomial time.\\[1mm]
(b)~Every problem in co-NP is reducible many-to-one in polynomial time to $\ENTA^\UCL_{\log}$.
Consider the map that
given a  formula $\varphi$ without unreliable connectives returns the pair $(\varphi,\emptyset)$. It is obvious to see that this map
is computable in polynomial time. Moreover, 
$$\VAL(\varphi)=1  \quad \text{iff} \quad \ENTA^\UCL_{\log}(\varphi,\emptyset)=1.$$
The thesis follows since $\VAL$ is co-NP complete. 
\end{proof}

\begin{corollary}\em
The problem $\ENTA^\UCL$  is co-NP hard. 
\end{corollary}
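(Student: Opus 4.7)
The plan is to observe that the reduction already constructed in part (b) of the preceding theorem does not actually use the logarithmic bound on the number of unreliable connectives, and hence witnesses co-NP hardness of the unrestricted problem $\ENTA^\UCL$ directly. The cleanest route is therefore to argue that $\ENTA^\UCL_{\log}$ is, by definition, the restriction of $\ENTA^\UCL$ to a subdomain of its inputs, so any many-to-one polynomial-time reduction whose image already lies inside $\Lc_{\log}(X) \times \wp_{\text{fin}} \La$ is simultaneously a reduction to $\ENTA^\UCL$.

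Concretely, first I would recall the reduction of the previous proof: given $\varphi \in L(X)$, map it to $(\varphi,\emptyset)$. Since $\varphi$ is a purely propositional formula, $\fau(\varphi)$ is the empty sequence, so $|\fau(\varphi)| = 0$, which trivially satisfies $|\fau(\varphi)| \in O(\log \norm{\varphi})$. Thus $(\varphi,\emptyset)$ is a valid input of both $\ENTA^\UCL_{\log}$ and $\ENTA^\UCL$, and the equivalence
\[
\VAL(\varphi) = 1 \;\;\text{iff}\;\; \ENTA^\UCL(\varphi,\emptyset) = 1
\]
holds for exactly the same reason it was established in the previous proof: with no ambition constraints and no unreliable gates, satisfaction in $\UCL$ by any interpretation $(v,\rho)$ reduces to $v \sat \varphi$, so entailment from $\emptyset$ coincides with being a tautology.

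Second, I would invoke the fact, already used, that $\VAL$ is co-NP complete and that the map $\varphi \mapsto (\varphi,\emptyset)$ is computable in polynomial time. These two facts together yield a polynomial-time many-to-one reduction from a co-NP complete problem to $\ENTA^\UCL$, which is the definition of co-NP hardness.

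There is no real obstacle here; the only subtlety worth stating explicitly is that co-NP hardness of $\ENTA^\UCL$ does not follow from co-NP hardness of $\ENTA^\UCL_{\log}$ as a black box (a hardness result for a restriction does not automatically lift to the more general problem), but does follow because the specific reduction at hand produces instances that already lie in the restricted domain. Once this is noted, the argument is a one-line corollary of the reduction in the preceding theorem.
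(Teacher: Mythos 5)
Your reduction-by-reinspection argument is correct and reaches the right conclusion, but the parenthetical claim motivating it — that ``co-NP hardness of $\ENTA^\UCL$ does not follow from co-NP hardness of $\ENTA^\UCL_{\log}$ as a black box,'' on the grounds that hardness of a restriction does not automatically lift — is backwards. Hardness of a restriction \emph{always} lifts to the more general problem: a many-one reduction $f$ from some $L$ to $\ENTA^\UCL_{\log}$ by definition has its image inside the domain $\Lc_{\log}(X)\times\wp_{\text{fin}}\La$, and since $\ENTA^\UCL_{\log}$ is a restriction of $\ENTA^\UCL$ the two functions agree on that domain, so $f$ is verbatim a reduction to $\ENTA^\UCL$. (What fails in general is the converse direction: hardness of a \emph{general} problem does not transfer to a sub-problem.) This is exactly why the paper's proof is the one-liner ``$\ENTA^\UCL_{\log}\subset\ENTA^\UCL$ and $\ENTA^\UCL_{\log}$ is co-NP complete.'' Your re-derivation via the explicit $\varphi\mapsto(\varphi,\emptyset)$ map is fine as a belt-and-suspenders check, but it is not needed, and the stated ``subtlety'' is not a genuine obstacle. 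Drop the erroneous caveat and the proof collapses to the paper's.
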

\begin{proof} Indeed $\ENTA^\UCL_{\log} \subset \ENTA^\UCL$ and,  by the previous theorem, $\ENTA^\UCL_{\log}$ is co-NP complete. 
\end{proof}


\paragraph{Application scenario} \ \\
Suppose that one wants to certify at least with probability equal to the reliability rate of the gates, that circuit $x_1 \lfdisj x_2$ with the unreliable connective $\lfdisj$ behaves as its ideal counterpart. Algorithm  $\cA_{\ENTA^\UCL}$ can be used for this verification. In fact
$$\cA_{\ENTA^\UCL}((x_1 \lfdisj x_2) \leqv (x_1 \ldisj x_2),\{\mu \leq \nu\})$$
returns 1 meaning that $\mu \leq \nu \entuc  (x_1 \lfdisj x_2) \leqv (x_1 \ldisj x_2)$ as one wants to show. 

\section{Model witness problem}\label{sec:mcp}

We now concentrate on the problem of constructing an interpretation (if there is at least one) satisfying a formula representing a circuit with unreliable gates. 
\begin{figure}
	       \hrule  \vspace*{4mm}
	       Input: c-formula $\psi$.
\begin{enumerate}
  \item For each $v \in V_{\var(\psi)}$ do:
  \begin{enumerate}
    \item Let $P^v_{\psi} := 0$;
    \item For each $\varphi \in \Omega_\psi$:
    \begin{enumerate}
       \item If $\SATV^\PL(\varphi,v)=1$ then 
       $P^v_{\psi}:=P^v_{\psi}+ \Pnu[\psi \triangleright \varphi]$;
    \end{enumerate}
       \item If $\tv{P^v_{\psi}}^{\nu \mapsto 1}>\frac12$ then Return $(1,(v,(1,\tv{P^v_{\psi}}^{\nu \mapsto 1})))$;
   \item If $\cA_{\SAT^{\exists\ORCF}}\left(\exists \nu \left((\frac{ 1}{ 2} <\nu <  1)\lconj(\frac{ 1}{ 2} <P^v_{\psi} <  1)\right)\right)=0$ then\\ Go to $1$;
    \item Let ${\overline\nu}_2:=3$;
    \item While True do:
    \begin{enumerate}
       \item For each ${\overline\nu}_1 \in \nats$ from $\lceil\frac{{\overline\nu_2+1}}{2}\rceil$ to $({\overline\nu}_2-1)$ do:
       \begin{enumerate}
   \item If $\tv{P^v_{\psi}}^{\nu \mapsto  \frac{{\overline\nu}_1}{{\overline\nu}_2}}>\frac12$ then Return $(1,(v,(\frac{{\overline\nu}_1}{{\overline\nu}_2},\tv{P^v_{\psi}}^{\nu \mapsto \frac{{\overline\nu}_1}{{\overline\nu}_2}})))$;
       \end{enumerate}
          \item Let ${\overline\nu}_2:={\overline\nu}_2+1$;
    \end{enumerate}

  \end{enumerate}
  \item Return $(0,\cdot)$.
  \end{enumerate}
       \hrule  \vspace*{4mm}
\caption{Algorithm $\cA_{\PMC^\UCL}$ for Problem $\PMC$.}
\label{alg:pmc}
\end{figure}

The {\it model witness problem} is a map  
$$\PMC^\UCL: \Lc(X) \to \left\{0,1\right\} \times \left(V \times \left(\rats \cap (\frac12,1]\right)^2\right)$$
where $V$ is the set of all finite valuations such that
\begin{itemize}
\item $\PMC(\psi)=(1,(v,(\ovl \nu, \ovl \mu)))$  implies that $(v,\rho) \satuc \psi$ where $\rho(\nu)= \ovl \nu$ and $\rho(\mu)= \ovl \mu$;
\item $\PMC(\psi)=(0,\cdot)$ implies that there are no $v$ and $\rho$  such that $(v,\rho) \satuc \psi$ and  $\frac 12 <  \rho(\mu),\rho(\nu)\leq 1$.\footnote{We use $\cdot$ for representing arguments not relevant for the case at hand.}
\end{itemize}

In order to propose an algorithm for the model witness problem, we need to refer to the $\SAT^{\exists\ORCF}$ problem described in the previous section.
\paragraph{Soundness and completeness}\ \\[1mm]
Let $\cA_{\PMC^\UCL}$ be the algorithm in Figure~\ref{alg:pmc}.  The fact that it is an algorithm follows from the next result. 

\begin{pth} \em
Let $\psi \in \Lc(X)$. Then,
\begin{itemize}
\item If $\PMC(\psi)=(1,\cdot)$ then $\cA_{\PMC^\UCL}(\psi)$ returns $(1,\cdot)$;
\item If $\PMC(\psi)=(0,\cdot)$ then $\cA_{\PMC^\UCL}(\psi)$ halts;
\item If $\cA_{\PMC^\UCL}(\psi)$ returns $(1,(v,(\ovl \nu, \ovl \mu)))$ then $(v,\rho)$,  where $\rho(\nu)= \ovl \nu$, $\rho(\mu)= \ovl \mu$,  is an interpretation satisfying $\psi$.
\end{itemize}
\end{pth} 
\begin{proof}\ \\
(1)~Assume that $\PMC(\psi)=(1,(v,(\ovl \nu, \ovl \mu)))$. 
Let $\rho$ be such that $\rho(\nu)= \ovl \nu$ and $\rho(\mu)= \ovl \mu$.
Then $(v,\rho) \satuc \psi$, that is 
$$\reals \rho \satfo \sum_{\begin{subarray} {c} \varphi\in\Omega_\psi\\
                                               v \sat \varphi\end{subarray}}\Pnu[\psi \triangleright \varphi]\geq \mu.$$
Observe that, when executing $\cA_{\PMC^\UCL}$ for $\psi$, 
$$P^v_\psi\text{ is }  \sum_{\begin{subarray} {c} \varphi\in\Omega_\psi\\
                                               v \sat \varphi\end{subarray}}\Pnu[\psi \triangleright \varphi] .$$
There are three cases to consider:\\
(a)~$\tv{P^v_\psi}^{\nu \mapsto 1}> \frac 12$. Then, $\cA_{\PMC^\UCL}(\psi)$  returns $(1,\cdot)$.
\\[2mm]
(b)~$\tv{P^v_\psi}^{\nu \mapsto 1}\leq  \frac 12$ and $\tv{P^v_\psi}^{\nu \mapsto \ovl \nu} < 1$. 
Then, in step (d), 
$$\cA_{\SAT^{\exists\ORCF}}\left(\exists \nu \left((\frac{ 1}{ 2} <\nu <  1)\lconj(\frac{ 1}{ 2} <P^v_{\psi} < 1)\right)\right)=1.$$
Let $$\mathcal P =\{2 \nu -1, 1-\nu, 2 P^v_\psi -1, 1-P^v_\psi \}.$$
Then $$S=\{d \in \reals: P(d) >0, P \in \mathcal P\}\neq \emptyset$$
and so, by Theorem~13.16 of~\cite{bas:06}, in each semi-algebraically connected component of $S$ there exists a rational number. 
Let $\frac m n$ be such a rational number. Then consider the assignment in (i) with $\ovl \nu_2=n$. Observe that
$$\ovl \nu_1 \in \{\lceil{\frac{n+1}{2}}\rceil,\dots,n-1\}$$
and that $2 \frac mn -1 >0$. Hence $m >\frac n 2$ and so $m \geq \lceil{\frac{n+1}{2}}\rceil$. On the other hand, 
$1 - \frac mn >0$ and so $m \leq n-1$. Then in cycle (i) there is an iteration where $\ovl \nu_1$ is $m$ and so,
since $2\tv{P^v_\psi}^{\nu \mapsto \frac {\ovl \nu_1}{\ovl \nu_2}}-1>0$  then the algorithm returns $(1,\cdot)$. \\[2mm]
(c)~$\tv{P^v_\psi}^{\nu \mapsto 1} \leq  \frac 12$ and $\tv{P^v_\psi}^{\nu \mapsto  \ovl \nu} = 1$. Then $\ovl \nu <1$. Hence, there is $\frac 12 <{\ovl \nu'} <1$  
such that $\frac 12 < \tv{P^v_\psi}^{\nu \mapsto  \ovl \nu'} < 1$. Then, this case follows (b).\\[2mm]
(2)~Assume that $\PMC(\psi)=(0,\cdot)$. Then, there are no $v$ and $\rho$ such that $(v,\rho) \satuc \psi$ and
$\frac 12 < \rho(\mu),\rho(\nu) \leq 1$. Then the guard in step 1(d) is always true and so the cycle in step 1(f) is never executed. Hence
the execution terminates since the set $V_{\var(\psi)}$ is finite.\\[2mm]
(3)~Assume that $\cA_{\PMC^\UCL}(\psi)$ returns $(1,(v,(\ovl \nu, \ovl \mu)))$. Then, one of following two cases holds:\\[1mm]
(a)~Condition in step 1(c) is satisfied for valuation $v$. Take $\rho(\nu)=1$ and $\rho(\mu)=\tv{P^v_{\psi}}^{\nu \mapsto 1}$. Then, $(v,\rho) \satuc \psi$. 
\\[1mm]
(b)~Condition in step 1(f)(A)  is satisfied for valuation $v$ and $\nu$ equal to $\frac {\ovl \nu_1}{\ovl \nu_2}$. Then take $\rho(\nu)=\frac {\ovl \nu_1}{\ovl \nu_2}$ and $\rho(\mu)=\tv{P^v_\psi}^{\nu \mapsto  \frac {\ovl \nu_1}{\ovl \nu_2}}$. Then, $(v,\rho) \satuc \psi$.
\end{proof}

\paragraph{Complexity}\ \\[1mm]
We start by showing that problem $\PMC$ is in PSPACE. Moreover, we also show that its decision version is NP complete when restricting 
the number of unreliable connectives in a c-formula.
 
\begin{pth} \em
Algorithm ${\cal A}_\PMC$ runs deterministically in polynomial space.
\end{pth}
\begin{proof} Let $\psi \in \Lc(X)$. 
We are going to prove that algorithm in Figure~\ref{alg:pmc} uses a polynomial amount of space. 
Let $n$ be the size in bits of the input $\psi$. Thus, $|\var(\psi)|$ and $|\fau(\psi)|$ are in $O(n)$.
Then:
\begin{itemize}
\item The storage of $v$ in Step 1 uses $O(n)$ bits;
\item The storage of $\varphi$ in Step 1(b) uses $O(n)$ bits;
\item The inner cycle 1(b) iterates $O(2^n)$ times;
\item The algorithm $\SATV^\PL$ runs in polynomial space $O(n^k)$ for some $k$;
\item The storage of each coefficient of polynomial $\Pnu[\psi \triangleright \varphi]$ uses $O(n^k)$ bits for some $k$;
\item The storage of polynomial $\Pnu[\psi \triangleright \varphi]$ uses $O(n^k)$ bits for some $k$;
\item The storage of each coefficient in $P^v_{\psi}$ is in $O(n^k)$ for some $k$;
\item The storage of polynomial $P^v_{\psi}$ uses $O(n^k)$ bits for some $k$;

\item The number of operations when executing $\cA_{\SAT^{\exists\ORCF}}$ is
$$4^2 |\fau(\psi)|^{O(1)}$$
in the ring generated by the coefficients of the polynomials in
$$\exists \nu \left((\frac{ 1}{ 2} <\nu <  1)\lconj(\frac{ 1}{ 2} <P^v_{\psi} <  1)\right);$$
\item Each such ring operation executed by $\cA_{\SAT^{\exists\ORCF}}$ has a polynomial cost in terms of bit operations (see Section~8.1 of~\cite{bas:06}). Hence, the execution of $\cA_{\SAT^{\exists\ORCF}}$ in step 1(d) uses at most $O(n^k)$ bits for some $k$;

\item The number of bits used in each iteration of cycle 1(f) is the sum of the number of bits used for storing $\ovl \nu_1$, $\ovl \nu_2$
and $\tv{P^v_\psi}^{\nu \mapsto \frac {\ovl \nu_1}{\ovl \nu_2}}$. By Theorem~13.16 of~\cite{bas:06}, the bitsize of $\frac{\ovl \nu_1}{\ovl \nu_2}$ is $\tau d^{O(1)}$ where $\tau$ is the bitsize of the coefficients of the polynomial $P^v_\psi$ of degree at most $d$. Hence, the bitsize of $\frac{\ovl \nu_1}{\ovl \nu_2}$ is $O(n^k)$ for some $k$. Therefore, the number of bits for storing $\ovl \nu_1$, $\ovl \nu_2$ is $O(n^k)$ for some $k$. Moreover, the number of bits for storing $\tv{P^v_\psi}^{\nu \mapsto \frac {\ovl \nu_1}{\ovl \nu_2}}$ is also polynomial;
\item Each iteration of cycle 1(f) reuses the space used in the previous iteration. So the cycle uses at most $O(n^k)$ bits for some $k$.
\end{itemize}
Thus, the algorithm runs deterministically in polynomial space. 
\end{proof}

It is straightforward to define the decision version of this problem. Such a problem is the satisfiability problem SAT for $\UCL$. 
Herein, we consider a restricted version of this problem which is NP complete.
The {\it logarithmic satisfiability problem} for $\UCL$ is the map 
$$\SATU^\UCL_{\log}: \Lc_{\log}(X) \to \{0,1\}$$
that given a formula $\psi$ representing a circuit with unreliable gates such that
 $|\fau(\psi)| \in O(\log (|\psi|))$, returns $1$ if and only if there is an interpretation $I$ such that $I \satuc \psi$.
To analyze the complexity of $\SATU^\UCL_{\log}$ we need to refer to the well known satisfiability problem for propositional logic. 
Let 
$$\text{SAT}: L(X) \to \{0,1\}$$ be the map that given a classical propositional formula $\varphi$ returns $1$ if and  only if  there is a valuation $v$ that satisfies $\varphi$.
Observe that the SAT problem is NP complete. 

\begin{figure}
	       \hrule  \vspace*{4mm}
	       Input: c-formula $\psi$ and valuation $v$.
  \begin{enumerate}
    \item Let $P^v_{\psi} := 0$;
    \item For each $\varphi \in \Omega_\psi$:
    \begin{enumerate}
       \item If $\SATV^\PL(\varphi,v)=1$ then 
       $P^v_{\psi}:=P^v_{\psi}+ \Pnu[\psi \triangleright \varphi]$;
    \end{enumerate}
       \item If $\tv{P^v_{\psi}}^{1}>\frac12$ then Return $1$;
   \item Return $\cA_{\SAT^{\exists\ORCF}}\left(\exists \nu \left((\frac{ 1}{ 2} <\nu <  1)\lconj(\frac{ 1}{ 2} <P^v_{\psi} <  1)\right)\right)$.
  \end{enumerate}
       \hrule  \vspace*{4mm}
\caption{Algorithm $\cA_{\SATU^\UCL_{\log}}$ for Problem $\SATU^\UCL_{\log}$.}
\label{alg:pmcsat}
\end{figure}

\begin{pth} \em
The problem $\SATU^\UCL_{\log}$  is NP complete.
\end{pth}
\begin{proof} Indeed:\\
(a)~$\SATU^\UCL_{\log}$ is in NP.  We consider the algorithm that  generates nondeterministically a valuation (the witness) 
with the verification subalgorithm  described in Figure~\ref{alg:pmcsat}. Observe that the generation of each valuation is polynomial on $\norm{\psi}$. 
Since the number of unreliable gates is logarithmic on $\norm{\psi}$, the cycle in step 2 over $\Omega_\psi$ only iterates a polynomial number of times. Moreover, each iteration
only takes polynomial time. 
The execution of $\cA_{\SAT^{\exists\ORCF}}$ takes $O(\norm{\psi}^k)$ time, for some $k$. Moreover,
the evaluation of a polynomial of degree $O(\log(\norm{\psi}))$ on a rational constant is polynomial in time. 
Hence, the whole verification part of the nondeterministic algorithm runs in polynomial time.\\[1mm]
(b)~Every problem in NP is reducible many-to-one in polynomial time to $\SATU^\UCL_{\log}$.
Consider the identity map over $L(X)$. It is computable in polynomial time. Since, 
$$\text{SAT}(\varphi)=1  \quad \text{iff} \quad \SATU^\UCL_{\log}(\varphi)=1$$
using Proposition~2.1 in \cite{acs:jfr:css:pmat:13}, 
the thesis follows taking into account that $\text{SAT}$ is NP complete. 
\end{proof}

\paragraph{Application scenario} \ \\
The algorithm $\cA_{\PMC^\UCL}(\psi)$ is stronger than a model checking algorithm since the former builds explicitly a model when  the given formula is satisfiable. Recall that in a model checking algorithm a model is also given and the objective is to verify 
whether or not the model satisfies the formula. 

Given a circuit, $\cA_{\PMC^\UCL}(\psi)$ allows 
to determine: (1)~a target success rate, (2)~a reliability rate of the gates and (3)~a classical model, satisfying such a circuit for the returned success and reliability rates. 

For example, consider the application of algorithm $\cA_{\PMC^\UCL}(\psi)$ to the formula $$((\lfneg x) \ldisj  (\lfneg x) \ldisj (\lfneg x)) \leqv (x \ldisj (\lneg x)).$$ There are several cases to consider.
\begin{itemize}
\item Assume that the first valuation considered in cycle 1 is $v$ such that $v(x)=0$. Then $P^v_\psi$ is
the polynomial $1-(1-\nu)^3$ where $\psi$ is the formula above. Then $\tv{P^v_\psi}^{\nu \mapsto 1}=1 > \frac 12$ and so the algorithm returns
$$(1,(v,(1,1))$$
in step 1(c).

\item Assume that the first valuation considered in cycle 1 is $v$ such that $v(x)=1$.  Then $P^v_\psi$ is
the polynomial $1-\nu^3$ where $\psi$ is the formula above. Then $\tv{P^v_\psi}^{\nu \mapsto 1}=0 \not > \frac 12$. 
Then cycle 1(f) returns immediately $(1,(v, \frac 2 3, \frac{19}{27}))$. 
\end{itemize}
\section{Reliability rate abduction problem} \label{sec:rra}

In this section, we discuss the problem of finding possible intervals, of length defined by  a given natural number, where the value of $\nu$ guarantees a given target success rate of a given  circuit. 
 
The {\it reliability rate abduction problem} is the map
$$\textstyle \ARR^\UCL: \Lc(X) \times \left(\rats \cap \left(\frac{ 1}{ 2},1\right]\right)  \times \nats^+\to \wp_{\text{fin}} ((\rats \cap \left(\frac{ 1}{ 2},1\right])\times (\rats \cap \left(\frac{ 1}{ 2},1\right] ))$$
that given a  formula $\psi$ representing a circuit with unreliable gates, a rational number $\ovl \mu$ in the interval $\left(\frac{ 1}{ 2},1\right]$ and a positive natural number $k$, returns the set $\ell$ of all intervals of the form $(\frac12 +  \frac j{2k}, \frac12 +  \frac{j+1}{2k}]$ for $j \in \{0,\dots,k-1\}$ such that for every interval $(a,b] \in \ell$, valuation $v$  and assignment $\rho$ over $\reals$ such that $\rho(\mu)=\ovl \mu$ and $a < \rho(\nu) \leq b$, 
$$(v,\rho) \satuc \psi.$$

\begin{figure}
	       \hrule  \vspace*{4mm}
	       Inputs: c-formula $\psi$, $\ovl \mu$ in $\rats \cap (\frac{ 1}{ 2},1]$ and $k \in \nats^+$.
\begin{enumerate}
\item $\ell :=\{\}$;
    \item For each $j=0,\dots, k-1$:
    \begin{enumerate}
  \item For each $v \in V_{\var(\psi)}$ do:
  \begin{enumerate}
    \item Let $P^v_{\psi} := 0$;
    \item For each $\varphi \in \Omega_\psi$:
    \begin{enumerate}
       \item If $\SATV^\PL(\varphi,v)=1$ then 
       $P^v_{\psi}:=P^v_{\psi}+ \Pnu[\psi \triangleright \varphi]$;
    \end{enumerate}
        \item If $\SAT^{\exists\ORCF}\left(\exists \nu \left(\left (\frac12 +  \frac j{2k} < \nu \leq \frac12 +  \frac{j+1}{2k}\right) \lconj 
    (\ovl \mu  > P^v_{\psi})
    \right)\right)=1$\\ then go to 2;
  \end{enumerate}
  \item $\ell:= \ell \cup \{(\frac12 +  \frac j{2k} ,\frac12 +  \frac{j+1}{2k})\}$
  \end{enumerate}
    \item Return $\ell$;
\end{enumerate}
       \hrule  \vspace*{4mm}
\caption{Algorithm $\cA_{\ARR^\UCL}$ for Problem $\ARR^\UCL$.}
\label{alg:arr}
\end{figure}

\paragraph{Soundness and completeness}\ \\
Let $\cA_{\ARR^\UCL}$ be the algorithm in Figure~\ref{alg:arr}. Observe that it is in fact an algorithm since 
(1)~the cycles in Step 2, Step 2(a) and Step 2(a)(ii) are over finite sets $\{0,\dots,k-1\}$,  $V_{\var(\psi)}$ and $\Omega(\psi)$, respectively; and 
(2)~$\SATV^\PL$ and $\SAT^{\exists\ORCF}$ are algorithms. 

\begin{pth} \em
Let $\psi \in \Lc(X)$, $\ovl \mu \in \rats \cap (\frac{ 1}{ 2},1]$, and $k \in \nats^+$. Then,
$${\cA_{\ARR^\UCL}}(\psi,\ovl \mu,k) \text{ returns } \ell \quad \text{iff} \quad \ARR^\UCL(\psi,\ovl \mu,k)=\ell.$$
\end{pth} 
\begin{proof} \ \\
($\to$)~Assume that  $\cA_{\ARR^\UCL}(\psi,\ovl \mu,k)$  returns $\ell$. 
Let $(\frac12 +  \frac j{2k} ,\frac12 +  \frac{j+1}{2k}) \in \ell$.
Then for every $\ovl \nu$ such that 
$$\left(\frac12 +  \frac j{2k} < \ovl\nu \leq \frac12 +  \frac{j+1}{2k}\right)$$  we have 
    $\tv{P^v_{\psi}}^{\nu \mapsto \ovl \nu} \geq \ovl \mu\; \text{ for each } v$.
Let $\rho$ be such that $\rho(\nu)\in (\frac12 +  \frac j{2k} ,\frac12 +  \frac{j+1}{2k})$ and  $\rho(\mu)=\ovl\mu$.
Then $(v,\rho) \satuc \psi$ for every valuation $v$ and so 
$(\frac12 +  \frac j{2k} ,\frac12 +  \frac{j+1}{2k}) \in \ARR^\UCL(\psi,\ovl \mu,k)$.\\[1mm]
$(\from)$~Let $(\frac12 +  \frac j{2k} ,\frac12 +  \frac{j+1}{2k}) \in \ARR^\UCL(\psi,\ovl \mu,k)$. Then,
for every valuation $v$ and assignment $\rho$ over $\reals$ such that $\rho(\mu)=\ovl \mu$ and $\frac12 +  \frac j{2k} < \rho(\nu) \leq \frac12 +  \frac{j+1}{2k}$ we have
$$(v,\rho) \satuc \psi,$$
that is, 
$$\reals \rho \satuc \sum_{\begin{subarray} {c} \varphi\in\Omega_\psi\\
                                               v \sat \varphi\end{subarray}}\Pnu[\psi \triangleright \varphi] \geq \mu.$$
Since
$$P^v_{\psi}=\sum_{\begin{subarray} {c} \varphi\in\Omega_\psi\\
                                               v \sat \varphi\end{subarray}}\Pnu[\psi \triangleright \varphi]$$
then
$$\reals \rho \satuc P^v_{\psi} \geq \mu.$$
Thus, when running algorithm ${\cA_{\ARR^\UCL}}$  with input $(\psi,\ovl \mu,k)$,  for every valuation, the guard of the If in step~2(a)(iii)  is false and so the interval 
$(\frac12 +  \frac j{2k} ,\frac12 +  \frac{j+1}{2k})$ is in the list returned by ${\cA_{\ARR^\UCL}}(\psi,\ovl \mu,k)$.
\end{proof}

\paragraph{Complexity}\ \\
We now analyze the space complexity of algorithm $\cA_{\ARR^\UCL}$. Afterwards we discuss the decision version of problem $\ARR^\UCL$. 

\begin{pth} \em
The algorithm $\cA_{\ARR^\UCL}$ runs deterministically in polynomial space.
\end{pth}
\begin{proof}
We are going to prove that algorithm in Figure~\ref{alg:arr} uses a polynomial amount of space.  Let $n$ be the size in bits of the input $\psi$, $\ovl \mu$ and $k$. Thus, $|\var(\psi)|$, $\norm{\ovl \mu}$, $\norm{k}$ and $|\fau(\psi)|$ are in $O(n)$.
Then:
\begin{itemize}
\item The storage of $j$ in Step 2 uses $O(n)$ bits;
\item The storage of $v$ in Step 2(a) uses $O(n)$ bits;
\item The storage of $\varphi$ in Step 2(a)(ii) uses $O(n)$ bits;
\item The inner cycle 2(a)(ii) iterates $O(2^n)$ times;
\item The algorithm $\SATV^\PL$ runs in polynomial space $O(n^k)$ for some $k$;
\item The storage of each coefficient of polynomial $\Pnu[\psi \triangleright \varphi]$ uses $O(n^k)$ bits for some $k$;
\item The storage of polynomial $\Pnu[\psi \triangleright \varphi]$ uses $O(n^k)$ bits for some $k$;
\item The storage of each coefficient  of  $P^v_{\psi}$  uses $O(n^k)$ bits for some $k$;
\item The storage of polynomial $P^v_{\psi}$ uses $O(n^k)$ bits for some $k$;

\item The number of operations when executing $\SAT^{\exists\ORCF}$ is
$$25 |\fau(\psi)|^{O(1)}$$
in the ring generated by the coefficients of the polynomials in
$$\exists \nu \left(\left (\frac12 +  \frac j{2k} < \nu \leq \frac12 +  \frac{j+1}{2k}\right) \lconj 
    (\ovl \mu  > P^v_{\psi})
    \right);$$
\item Each such ring operation executed by $\SAT^{\exists\ORCF}$ has a polynomial cost in terms of bit operations (see Section~8.1 of~\cite{bas:06}). Hence, the execution of $\SAT^{\exists\ORCF}$ in step 2(a)(iii) uses at most $O(n^k)$ bits for some $k$;
\end{itemize}
Thus, the algorithm runs deterministically in polynomial space.
\end{proof}

We now consider a decision problem for $\ARR^\UCL$ and prove that it is NP complete.
The {\it logarithmic reliability rate decision problem} is the map 
$$\RRD^\UCL_{\log}: \Lc_{\log}(X) \times \left(\rats \cap \left(\frac{ 1}{ 2},1\right]\right) \to \{0,1\}$$
that given a formula $\psi$ representing a circuit with unreliable gates such that
 $|\fau(\psi)|, |\var(\psi)| \in O(\log \norm{\psi})$ and a success circuit rate $\ovl \mu$, returns $1$ if and only if there is an assignment $\rho$ with 
 $\rho(\mu)=\ovl \mu$ such that
$$(v,\rho) \satuc \psi$$ for every valuation $v$.

\begin{figure}
	       \hrule  \vspace*{4mm}
	       Inputs: c-formula $\psi$ and $\ovl \mu$ in $\rats \cap (\frac{ 1}{ 2},1]$.
\begin{enumerate}
\item $\ell := \{\}$;
  \item For each $v \in V_{\var(\psi)}$ do:
  \begin{enumerate}
    \item Let $P^v_{\psi} := 0$;
    \item For each $\varphi \in \Omega_\psi$:
    \begin{enumerate}
       \item If $\SATV^\PL(\varphi,v)=1$ then 
       $P^v_{\psi}:=P^v_{\psi}+ \Pnu[\psi \triangleright \varphi]$;
    \end{enumerate}
    \item $\ell :=\ell \cup \{P^v_{\psi}\}$;
  \end{enumerate}
        \item Return $\SAT^{\exists\ORCF}\left(\exists \nu \left(\left (\frac12  < \nu \leq 1\right) \lconj 
    \bigwedge_{P \in \ell} \left(\ovl \mu  \leq P\right)
    \right)\right)$.
\end{enumerate}
       \hrule  \vspace*{4mm}
\caption{Algorithm $\cA_{\RRD^\UCL}$ for Problem $\RRD^\UCL$.}
\label{alg:rrd}
\end{figure}

\begin{pth} \em
The problem $\RRD^\UCL_{\log}$  is in $P$.
\end{pth}
\begin{proof} Indeed:\\
We are going to prove that algorithm in Figure~\ref{alg:rrd} runs in polynomial time.  Let $n$ be the size in bits of the input $\psi$ and $\ovl \mu$. Thus, $|\var(\psi)|$ and $|\fau(\psi)|$ are in $O(\log n)$.
Then:
\begin{itemize}
\item The  cycle in step 2 iterates $O(n)$ times;
\item The inner cycle in step 2(b) iterates $O(n)$ times;
\item The algorithm $\SATV^\PL$ runs in polynomial time $O(n^k)$ for some $k$;
\item The computation of $\Pnu[\psi \triangleright \varphi]$ runs in $O(n^k)$ for some $k$;
\item The sum of the polynomials $\Pnu[\psi \triangleright \varphi]$ and $P^v_{\psi}$  takes  $O(n^k)$ time for some $k$;
\item The time complexity of executing cycle 2(b) is in $O(n^k)$ for some $k$;
\item The time complexity of executing cycle 2 is in $O(n^k)$ for some $k$;
\item The number of operations when executing $\SAT^{\exists\ORCF}$ is
$$O(n)^2O(\log n)^{O(1)}$$
in the ring generated by the coefficients of the polynomials in
$$\exists \nu \left(\left (\frac12  < \nu \leq 1\right) \lconj 
    \bigwedge_{P \in \ell} \left(\ovl \mu  \leq P\right)
    \right);$$
\item Each such ring operation executed by $\SAT^{\exists\ORCF}$ has a polynomial cost in terms of bit operations (see Section~8.1 of~\cite{bas:06}). Hence, the execution of $\SAT^{\exists\ORCF}$ in step 3 takes at most $O(n^k)$ time for some $k$.
\end{itemize}
Thus, the algorithm runs  in polynomial time.
\end{proof}

\paragraph{Application scenario} \ \\
Suppose that we want to determine the reliability rate of gates that ensure that
the circuit represented by the formula
$x \lfdisj (\lfneg x)$ has a success rate of $0.7$. We can use algorithm $\cA_{\ARR^\UCL}$ to address this problem. Indeed, using 
$k=3$, the execution of $\cA_{\ARR^\UCL}(x \lfdisj (\lfneg x),0.7,3)$ returns $\{(0.875,1)\}$. This means that if the circuit uses gates with a reliability rate over $0.875$ then the probability of $x \lfdisj (\lfneg x)$ being equivalent to $x \ldisj (\lneg x)$ is at least $0.7$.

\section{Success rate optimization problem}\label{sec:sro}

Assume that we have two circuits with unreliable gates and we want to determine how close they are of being equivalent by finding the reliability rate of the gates that maximizes their equivalence. In a more abstract way, this problem consists of, given a formula in $\UCL$, finding a success rate and a reliability rate of the gates in such a way that the success rate is a maximum. 

The {\it success rate optimization problem} is the map
$$\OSC^\UCL: \Lc(X) \to  \{0,1\} \times \left(\rats \cap \left(\frac12,1\right]\right)^2$$
that given a formula $\psi$ representing a circuit with unreliable gates returns either $1$ together with a pair $(\ovl \nu,\ovl\mu)$ of rational numbers such that:
\begin{itemize}
\item $(v,\rho) \satuc \psi$ for every valuation $v$ and assignment $\rho$ such that $\rho(\nu)=\ovl \nu$ and $\rho(\mu)=\ovl \mu$;
\item For every assignment $\rho'$ with $\frac 12 <\rho'(\nu),\rho'(\mu)\leq 1$, if $(v',\rho') \satuc \psi$ for every valuation $v'$ then $\rho'(\mu) \leq \ovl \mu$;
\end{itemize}
or returns $0$ as the first component if there is no such pair $(\ovl \nu,\ovl\mu)$ satisfying the above conditions.

In order to propose an algorithm for this problem, we need to introduce some material related to the theory of real closed ordered fields. 

Let $\cal P$ be a finite set of polynomials over $\Sigma_\orcf$. A $\cal P$-atom is a formula of the form $P=0$, $P \neq 0$, $P >0$ and $P <0$ where $P \in \cal P$, and a $\cal P$-formula is a formula written with $\cal P$-atoms.  Given a set $\cal P$ of polynomials, 
let $\textit{QPL}_{\cal P}(\Sigma_\orcf)$ be the set of all $\cal P$-formulas of the form 
$$Qx_1 \dots Qx_n \; \delta$$ where $Q$ is either $\forall$ or $\exists$ and  $\delta$ is a quantifier free formula, and let 
$\textit{QFL}(\Sigma_\orcf)$ be the set of all quantifier free formulas over $\Sigma_\orcf$.

Given a set of polynomials $\cal P$, the cylindrical quantifier elimination problem is the map 
$$\CQE_{\cal P}: \textit{QPL}_{\cal P}(\Sigma_\orcf) \to \textit{QFL}(\Sigma_\orcf)$$
that given a formula $\eta$ in $\textit{QPL}_{\cal P}(\Sigma_\orcf)$ returns $\lfalsum$ if $\eta$ is not satisfiable, otherwise 
returns 
a formula $\eta'$ in  $\textit{QFL}(\Sigma_\orcf)$
with the same free variables and such that $\reals \satfo \eta \leqv \eta'$. Herein, we consider the algorithm 11.16 of~\cite{bas:06} for solving this problem, that runs in 
$(sd)^{{O(1)}^k}$ time where $s$ is the number of polynomials in $\cal P$, $d$ is a bound on the degree of the polynomial and $k$ is the number of quantified variables in the given formula. 

\begin{figure}[ht]
	       \hrule  \vspace*{4mm}
	       Inputs: c-formula $\psi$.
\begin{enumerate}
\item ${\cal P} :=\{\}$;
  \item For each $v \in V_{\var(\psi)}$ do:
  \begin{enumerate}
    \item Let $P^v_{\psi} := 0$;
    \item For each $\varphi \in \Omega_\psi$:
    \begin{enumerate}
       \item If $\SATV^\PL(\varphi,v)=1$ then 
       $P^v_{\psi}:=P^v_{\psi}+ \Pnu[\psi \triangleright \varphi]$;
    \end{enumerate}
   \item ${\cal P}:= {\cal P} \cup \{P^v_{\psi}\}$;
  \end{enumerate}
    \item $\eta := \displaystyle \forall x \forall y  \left(\frac12 < \mu,\nu\leq 1 \lconj \bigwedge_{P \in {\cal P}} \mu \leq P \right) \; \lconj$\\[1mm]
                        \hspace*{20mm} $\displaystyle \left(\left(\frac12 < x,y \leq 1 \lconj \bigwedge_{P \in {\cal P}} y \leq [P]^\nu_x \right)\limp y \leq \mu\right)$;
    \item Return $\CQE_{{\cal P}}(\eta)$.
    \end{enumerate}
       \hrule  \vspace*{4mm}
\caption{Algorithm $\cA_{\OSC^\UCL}$ for Problem $\OSC^\UCL$.}
\label{alg:orcg}
\end{figure}

\paragraph{Soundness and completeness}\ \\[1mm]
Let $\cA_{\OSC^\UCL}$ be the algorithm in Figure~\ref{alg:orcg} where $[P]^\nu_x$ is the polynomial obtained from polynomial $P$ by replacing $\nu$ by $x$.
Observe that it is in fact an algorithm since 
(1)~the cycles in Step 2 and in Step 2(b) are over a finite set of valuations and outcomes, respectively; and (2)~$\SATV^\PL$ and $\CQE_{\cal P}$ are algorithms.

\begin{pth} \em
Let $\psi \in \Lc(X)$.  Then,
\begin{enumerate}
\item $\cA_{\OSC^\UCL}(\psi)$ returns  $\lfalsum$  if and only if  $\OSC^\UCL(\psi)|_1=0$;
\item If $\cA_{\OSC^\UCL}(\psi)  \; \text{returns a satisfiable formula} \;(\mu\cong \ovl \mu) \lconj \alpha$, where $\ovl \mu$ is a term without variables and $\alpha$ is a disjunction of a conjunction of atoms involving only variable $\nu$ then $\OSC^\UCL(\psi)=(1,(\ovl \nu,\ovl \mu))$ where
$\ovl \nu$ satisfies $\alpha$ in $\ORCF$;
\item If $\OSC^\UCL(\psi)=(1,(\ovl \nu,\ovl \mu))$ then $\reals \rho \satfo \cA_{\OSC^\UCL}(\psi)$ where $\rho$ is such that $\rho(\mu)=\ovl \mu$ and $\rho(\nu)=\ovl \nu$. 
\end{enumerate}
\end{pth} 
\begin{proof}\ \\
1.~$(\to)$~Assume that $\cA_{\OSC^\UCL}(\psi) \; \text{returns} \; \lfalsum$. Then there are two cases:\\
(a)~There are no $\ovl \mu$ and $\ovl \nu$ such that
$$\reals \rho \satfo \left(\frac12 < \mu,\nu\leq 1 \lconj \bigwedge_{P \in {\cal P}} \mu \leq P \right)$$
where $\rho$ is such that $\rho(\mu)=\ovl \mu$ and $\rho(\nu)=\ovl \nu$. Hence, for every assignment $\rho$ there is a valuation $v$  such that 
$$\reals \rho \nsatfo \left(\frac12 < \mu,\nu\leq 1 \lconj  \mu \leq P^v_\psi \right),$$
that is, $(v,\rho) \nsatuc \psi$. So $\OSC^\UCL(\psi)|_1$ is $0$.\\[1mm]
(b)~Otherwise, there is $\rho$ such that
$$\reals \rho \nsatfo \displaystyle \left(\left(\frac12 < x,y \leq 1 \lconj \bigwedge_{P \in {\cal P}} y \leq [P]^\nu_x \right)\limp y \leq \mu\right).$$
Hence there is not a solution of optimisation problem:
$$\begin{cases}
\displaystyle\max_{\mu,\nu} \;\mu\\[2mm]
\frac12 < \mu,\nu\leq 1 \lconj \bigwedge_{P \in {\cal P}} \mu \leq P
\end{cases}$$
and so $\OSC^\UCL(\psi)|_1$ is $0$.\\[2mm]
$(\from)$~$\OSC^\UCL(\psi)|_1=0$. There are two cases. \\
(a)~For every assignment $\rho$ with $\frac 12 < \rho(\mu),\rho(\nu)\leq 1$
there is a valuation $v$ such that $(v,\rho) \nsatuc \psi$. Hence, given such a $\rho$ let $v$ be such that $(v,\rho) \nsatuc \psi$.
Thus, $\reals \rho \nsatfo \mu \leq P^v_\psi$ and so $\CQE_{{\cal P}}(\eta)$ returns $\lfalsum$.\\[1mm]
(b)~Otherwise, for every assignment $\rho$ with $\frac 12 < \rho(\mu),\rho(\nu)\leq 1$ such that $(v,\rho) \satuc \psi$ for every valuation $v$, 
there is an assignment $\rho'$ with $\frac 12 < \rho'(\mu),\rho'(\nu)\leq 1$ such that $(v',\rho') \satuc \psi$ 
for every valuation $v'$ and $\rho'(\mu) > \rho(\mu)$. Let $\rho$ be such that $\frac 12 < \rho(\mu),\rho(\nu)\leq 1$ and $(v,\rho) \satuc \psi$ for every valuation $v$ and $\rho'$ with $\frac 12 < \rho'(\mu),\rho'(\nu)\leq 1$ such that $(v',\rho') \satuc \psi$ 
for every valuation $v'$ and $\rho'(\mu) > \rho(\mu)$. Take 
$\rho''$ as the assignment such that $\rho''(x)=\rho'(\nu)$,
$\rho''(y)=\rho'(\mu)$, $\rho''(\nu)=\rho(\nu)$ and $\rho''(\mu)=\rho(\mu)$. 
Hence
$$\reals \rho'' \nsatfo \displaystyle \left(\left(\frac12 < x,y \leq 1 \lconj \bigwedge_{P \in {\cal P}} y \leq [P]^\nu_x \right)\limp y \leq \mu\right)$$
and so $\CQE_{{\cal P}}(\eta)$ returns $\lfalsum$.\\[1mm]
2.~Assume that $\cA_{\OSC^\UCL}(\psi)$ returns a satisfiable formula $(\mu\cong \ovl \mu) \lconj \alpha$ 
where $\ovl \mu$ is a term without variables and $\alpha$ is a disjunction of a conjunction of atoms involving only variable $\nu$. Then, 
$$\begin{array}{l}
\reals \satfo \displaystyle \forall x \forall y  \left(\frac12 < \mu,\nu\leq 1 \lconj \bigwedge_{P \in {\cal P}} \mu \leq P \right) \lconj \\
\hspace*{20mm}\displaystyle \left(\left(\frac12 < x,y \leq 1 \lconj \bigwedge_{P \in {\cal P}} y \leq [P]^\nu_x \right)\limp y \leq \mu\right) \\[4mm]
 \hspace*{10mm} \leqv \\[2mm]
\hspace*{10mm} (\mu\cong \ovl \mu) \lconj \alpha.
\end{array}
$$
Since $(\mu\cong \ovl \mu) \lconj \alpha$ is a satisfiable formula let $\rho$ be an assignment such that $\rho(\mu)=\ovl \mu$ and
$\reals \rho \satfo \alpha$. Then, for each valuation $v$, $\reals \rho \satfo \mu \leq P_\psi^v$ and so $(v,\rho) \satuc \psi$ for every $v$.
Moreover, let $\rho'$ be an assignment with $\frac 12 <\rho'(\nu),\rho'(\mu)\leq 1$ and assume that $(v',\rho') \satuc \psi$ for every valuation $v'$. Take $\rho''$ as the assignment such that $\rho''(x)=\rho'(\nu)$,
$\rho''(y)=\rho'(\mu)$, $\rho''(\nu)=\rho(\nu)$ and $\rho''(\mu)=\rho(\mu)$. Hence
$$\begin{array}{l}
\reals \rho''\satfo \displaystyle \forall x \forall y  \left(\frac12 < \mu,\nu\leq 1 \lconj \bigwedge_{P \in {\cal P}} \mu \leq P \right) \lconj \\
\hspace*{20mm}\displaystyle \left(\left(\frac12 < x,y \leq 1 \lconj \bigwedge_{P \in {\cal P}} y \leq [P]^\nu_x \right)\limp y \leq \mu\right).
\end{array}
$$
Since
$$\reals \rho'' \satfo \frac12 < x,y \leq 1 \lconj \bigwedge_{P \in {\cal P}} y \leq [P]^\nu_x$$
then $$\reals \rho'' \satfo y \leq \mu.$$
Therefore $\rho'(\mu) \leq \rho(\mu)$ and so $\rho'(\mu) \leq \ovl \mu$. Thus, $\OSC^\UCL(\psi)=(1,(\ovl \nu,\ovl \mu))$ for
some $\ovl \nu$. \\[1mm]
3.~Assume that $\OSC^\UCL(\psi)=(1,(\ovl \nu,\ovl \mu))$. Let $\rho$ be an assignment such that 
$\rho(\mu)=\ovl \mu$ and $\rho(\nu)=\ovl \nu$. Then $(v,\rho) \satuc \psi$  hence 
$\reals \rho \satfo \mu \leq P_\psi^v$
for every valuation $v$ and so
$$\reals \rho \satfo \left(\frac12 < \mu,\nu\leq 1 \lconj \bigwedge_{P \in {\cal P}} \mu \leq P \right).$$
Let $\rho'$ be $\{x,y\}$-equivalent to $\rho$ with $\frac 12 < \rho'(x),\rho'(y) \leq 1$ and assume that 
$$\reals \rho' \satfo \bigwedge_{P \in {\cal P}} y \leq [P]^\nu_x.$$
Let $\rho''$ be an assignment such that $\rho''(\nu)=\rho'(x)$ and $\rho''(\mu)=\rho'(y)$. Hence
$$\reals \rho'' \satfo \bigwedge_{P \in {\cal P}} \mu \leq P.$$
Therefore, $(v',\rho'') \satuc \psi$ for very valuation $v'$. Thus, $\rho''(\mu) \leq \rho(\mu)$ and so
$\rho'(y) \leq \rho(\mu)$. Finally,
$$\reals \rho' \satfo \displaystyle \left(\left(\frac12 < x,y \leq 1 \lconj \bigwedge_{P \in {\cal P}} y \leq [P]^\nu_x \right)\limp y \leq \mu\right)$$
and so $\reals\rho \satfo \cA_{\OSC^\UCL}(\psi)$.
\end{proof}

\paragraph{Complexity} \ \\
We analyse the time complexity of algorithm $\cA_{\OSC^\UCL}$. 

\begin{pth} \em
The algorithm $\cA_{\OSC^\UCL}$ runs deterministically in exponential time.
\end{pth}
\begin{proof}
We are going to prove that algorithm in Figure~\ref{alg:orcg} takes an exponential amount of time.  Let $n$ be the size in bits of the input $\psi$. Thus, $|\var(\psi)|$ is in $O(n)$.
Then:
\begin{itemize}

\item The outer cycle in step 2 iterates $O(2^n)$ times;
\item The inner cycle in step 2(b) iterates $O(2^n)$ times;
\item The algorithm $\SATV^\PL$ runs in polynomial time $O(n^k)$ for some $k$;
\item The computation of $\Pnu[\psi \triangleright \varphi]$ runs in $O(n^k)$ for some $k$;
\item The sum of the polynomials $\Pnu[\psi \triangleright \varphi]$ and $P^v_{\psi}$  takes  $O(n^k)$ time for some $k$;
\item The time complexity of executing cycle in step 2(b) is in $O(n^k)$ for some $k$;
\item The time complexity of executing cycle in step 2 is in $O(n^k)$ for some $k$;

\item The number of operations when executing $\CQE$ is
$$(2^nn)^{O(1)^2}$$ since the number of polynomials is $O(2^n)$, the maximum degree of each polynomial is $O(n)$ and the number of variables is $2$. The operations are over the integral domain generated by the coefficients of the polynomials in\\[1mm]
$\displaystyle \forall x \forall y  \left(\frac12 < \mu,\nu\leq 1 \lconj \bigwedge_{P \in {\cal P}} \mu \leq P \right) \; \lconj$\\[1mm]
                        \hspace*{20mm} $\displaystyle \left(\left(\frac12 < x,y \leq 1 \lconj \bigwedge_{P \in {\cal P}} y \leq [P]^\nu_x \right)\limp y \leq \mu\right);$
\item Each such integral domain operation executed by $\CQE$ has a polynomial cost in terms of bit operations (see Section~8.1 of~\cite{bas:06}). Hence, the execution of $\CQE$ in step 3 takes at most $O(n^k)$ time for some $k$.
\end{itemize}
Hence, algorithm $\cA_{\OSC^\UCL}$ runs deterministically in exponential time.
\end{proof}

\paragraph{Application scenario} \ \\
Suppose that we want to determine the maximum success rate of a circuit and the reliability rate of the gates that ensures such a maximum.
In this case we can use algorithm $\cA_{\OSC^\UCL}$. For example, assume that we want to investigate how close is the circuit  represented by the following c-formula
$$(\lfneg x) \ldisj  (\lfneg x) \ldisj (\lfneg x)$$
to the circuit represented by the formula $x \ldisj (\lneg x)$. 
For that, we can apply algorithm $\cA_{\OSC^\UCL}$  to the formula
$$((\lfneg x) \ldisj  (\lfneg x) \ldisj (\lfneg x)) \leqv (x \ldisj (\lneg x))$$
which  returns
$$(\nu\cong 0.6306) \lconj (\mu \cong 0.931773).$$
This means that the maximum fidelity between the given circuits is $0.931773$ providing that the reliability rate of the gates is  $0.6306$.
This information is also useful to an engineer in what concerns the gates to be used in order to obtain the circuit with the best possible success rate.

\section{Outlook}\label{sec:outlook}

In this paper, we investigated the complexity of some computational and decision problems in the logic $\UCL$ for reasoning about circuits with unreliable gates  (cf~\cite{acs:jfr:css:pmat:13}).~The algorithms herein proposed for these problems rely on algorithms that are known for propositional logic (which was expected since $\UCL$ is a conservative extension of $\PL$) and on algorithms for real closed ordered fields (see~\cite{bas:06}). 

Our results allow to conclude that establishing the validity in $\UCL$ is not more difficult than to establish validity in propositional logic, providing that
the number of connectives representing unreliable gates is logarithmic on the length of the formula. Moreover, the same happens with respect to the satisfiability problem. That is, establishing the satisfiability in $\UCL$ is not more difficult than to establish satisfiability in propositional logic, providing that
the number of connectives representing unreliable gates is logarithmic on the length of the formula.

Research on $\UCL$ aims at extending the logic to allowing probabilistic inputs and afterwards to address the issue of 
reasoning about quantum circuits with unreliable gates. Thereafter, we intend to analyze computation and decision problems 
in this context.

\section*{Acknowledgments}

This work was supported by Fundação para a Ciência e a Tecnologia by way of grant UID/MAT/04561/2013 to 
Centro de Matemática, Aplicações Fundamentais e Investigação Operacional of Universidade de Lisboa (CMAF-CIO),
 grant UID/EEA/50008/2013 to Instituto de Telecomunicações (IT), and project PTDC/EEI-CTP/4503/2014. Paulo Mateus
 acknowledges IT internal project QBigD.




\begin{thebibliography}{10}

\bibitem{bas:06}
S.~Basu, R.~Pollack, and M.~F. Roy.
\newblock {\em Algorithms in Real Algebraic Geometry}.
\newblock Springer-Verlag, 2006.

\bibitem{bol:10}
B.~Bollig.
\newblock Exponential space complexity for {OBDD}-based reachability analysis.
\newblock {\em Information Processing Letters}, 110(21):924--927, 2010.

\bibitem{cla:09}
E.~M. Clarke, E.~A. Emerson, and J.~Sifakis.
\newblock Turing {L}ecture: {M}odel checking: {A}lgorithmic verification and
  debugging.
\newblock {\em Communications of the ACM}, 52(11):74--84, 2009.

\bibitem{han:11}
J.~Han, E.~R. Boykin, H.~Chen, J.~Liang, and J.~A.~B. Fortes.
\newblock On the reliability of computational structures using majority logic.
\newblock {\em IEEE Transactions on Nanotechnology}, 10(5):1099--1112, 2011.

\bibitem{hom:11}
S.~Homer and A.~L. Selman.
\newblock {\em Computability and Complexity Theory}.
\newblock Texts in Computer Science. Springer-Verlag, New York, 2011.
\newblock 2nd {E}dition.

\bibitem{lee:12}
S.~H. Lee and S.~Vishwanath.
\newblock Boolean functions over nano-fabrics: Improving resilience through
  coding.
\newblock {\em IEEE Transactions on VLSI Systems}, 20(11):2054--2065, 2012.

\bibitem{mar:02}
D.~Marker.
\newblock {\em Model Theory}, volume 217 of {\em Graduate Texts in
  Mathematics}.
\newblock Springer-Verlag, 2002.

\bibitem{jfr:css:acs:14}
J.~Rasga, C.~Sernadas, and A.~Sernadas.
\newblock Craig interpolation in the presence of unreliable connectives.
\newblock {\em Logica Universalis}, 8(3-4):423--446, 2014.

\bibitem{rej:09}
T.~Rejimon, K.~Lingasubramanian, and S.~Bhanja.
\newblock Probabilistic error modeling for nano-domain logic circuits.
\newblock {\em IEEE Transactions on VLSI Systems}, 17(1):55--65, 2009.

\bibitem{acs:jfr:css:pmat:13}
A.~Sernadas, J.~Rasga, C.~Sernadas, and P.~Mateus.
\newblock Approximate reasoning about logic circuits with single-fan-out
  unreliable gates.
\newblock {\em Journal of Logic and Computation}, 24(5):1023--1069, 2014.

\bibitem{acs:css:jfr:pmat:11}
A.~Sernadas, C.~Sernadas, J.~Rasga, and P.~Mateus.
\newblock Non-deterministic combination of connectives.
\newblock In Jean-Yves B{\'e}ziau and Marcelo Coniglio, editors, {\em Logic
  without Frontiers: Festschrift for Walter Alexandre Carnielli on the Occasion
  of his 60th Birthday}, volume~17 of {\em Tribute Series}, pages 321--338.
  College Publications, London, 2011.

\bibitem{tar:51}
A.~Tarski.
\newblock {\em A Decision Method for Elementary Algebra and Geometry}.
\newblock University of California Press, 1951.
\newblock 2nd ed.

\bibitem{neu:56}
J.~von Neumann.
\newblock Probabilistic logics and the synthesis of reliable organisms from
  unreliable components.
\newblock In {\em Automata {S}tudies}, Annals of mathematics studies, no. 34,
  pages 43--98. Princeton University Press, 1956.

\end{thebibliography}
\end{document}
